\newcounter{RomanNumber}
\newcommand{\MyRoman}[1]{\setcounter{RomanNumber}{#1}\Roman{RomanNumber}}
\newcommand{\qdtree}{Quadtree\xspace}
\newcommand{\pyr}{Pyramid\xspace}
\newcommand{\gst}{Grid structure\xspace}
\newcommand{\gtree}{Grid-Tree\xspace}
\newcommand{\gtrees}{Grid-Trees\xspace}
\newcommand{\sgtree}{SG-Tree\xspace}
\newcommand{\esg}{SSG-Trees\xspace}
\newcommand{\ssg}{SSG-Trees\xspace}
\newcommand{\sg}{SG-Tree\xspace}
\newcommand{\logs}{Log Structure\xspace}
\newcommand{\window}{Sliding Window\xspace}
\newcommand{\st}{\textit{spatio-textual}\xspace}
\newcommand{\rtree}{R-tree\xspace}
\newcommand{\irtree}{IR$^2$-tree\xspace}
\newcommand{\keywords}[1]{\par\addvspace\baselineskip
\noindent\keywordname\enspace\ignorespaces#1}
\begin{document}

\mainmatter  % start of an individual contribution

% first the title is needed
\title{Efficient Top K Temporal Spatial Keyword Search}

% a short form should be given in case it is too long for the running head
\titlerunning{Efficient Top K Temporal Spatial Keyword Search}

% the name(s) of the author(s) follow(s) next
%
% NB: Chinese authors should write their first names(s) in front of
% their surnames. This ensures that the names appear correctly in
% the running heads and the author index.
%
\begin{comment}
\author{Alfred Hofmann%
\thanks{Please note that the LNCS Editorial assumes that all authors have used
the western naming convention, with given names preceding surnames. This determines
the structure of the names in the running heads and the author index.}%
\and Ursula Barth\and Ingrid Haas\and Frank Holzwarth\and\\
Anna Kramer\and Leonie Kunz\and Christine Rei\ss\and\\
Nicole Sator\and Erika Siebert-Cole\and Peter Stra\ss er}
\end{comment}
\author{}
\authorrunning{Efficient Top K Temporal Spatial Keyword Search}
\institute{}
\begin{comment}
\end{comment}
\author{Chengyuan Zhang$^\dagger$, Lei Zhu$^\dagger$, Weiren Yu$^\ddagger$, Jun Long$^\dagger$, Fang Huang$^\dagger$,  Hongbo Zhao$^{\natural}$}
%

% (feature abused for this document to repeat the title also on left hand pages)

% the affiliations are given next; don't give your e-mail address
% unless you accept that it will be published
%$^\dagger$ School of Information Science, Central South University, Changsha 410083, Hunan, PR China\\ $^\ddagger$ School of Minerals Processing & Bioengineering, Central South University, Changsha 410083, Hunan, PR China\\
%\institute{School of Information Science, Central South University,\\
\institute{$^\dagger$ School of Information Science, Central South University, PR China\\
$^{\ddagger}$ School of Engineering and Applied Science, Aston University, United Kingdom\\
$^{\natural}$ School of Minerals Processing and Bioengineering, Central South University, PR China\\
%\mailsa\\
%\mailsb\\
%\mailsc\\
\mailsd\\
%\url{http://www.springer.com/lncs}
}

%
% NB: a more complex sample for affiliations and the mapping to the
% corresponding authors can be found in the file "llncs.dem"
% (search for the string "\mainmatter" where a contribution starts).
% "llncs.dem" accompanies the document class "llncs.cls".
%

\toctitle{Lecture Notes in Computer Science}
\tocauthor{Authors' Instructions}
\maketitle

%%%%%%%%%%%%%%%%%%%%%%%%%%%%%%
%\begin{abstract}
%The abstract should summarize the contents of the paper and should
%contain at least 70 and at most 150 words. It should be written using the
%\emph{abstract} environment.
%\keywords{We would like to encourage you to list your keywords within
%the abstract section}
%\end{abstract}
%%%%%%%%%%%%%%%%%%%%%%%%%%%%%%

\vspace{-2mm}
\begin{abstract}
\emph{Massive amount of data that are geo-tagged and associated with text information are being generated at an unprecedented scale in many emerging applications such as location based services and social networks. Due to their importance, a large body of work has focused on efficiently computing various spatial keyword queries. In this paper,we study the top-$k$ temporal spatial keyword query which considers three important constraints during the search including time, spatial proximity and textual relevance. A novel index structure, namely SSG-tree, to efficiently insert/delete spatio-temporal web objects with high rates. Base on SSG-tree an efficient algorithm is developed to support top-k temporal spatial keyword query. We show via extensive experimentation
with real spatial databases that our method has increased performance
over alternate techniques} .
\keywords{Temporal spatial keyword, high rates, memory}
\end{abstract}

\vspace{-2mm}
\section{Introduction}
\label{intro}

Due to the proliferation of user generated content and geo-equipped
devices, massive amount of microblogs (e.g., tweets, Facebook comments, and
Foursquare check-in's) that contain both text information ~\cite{DBLP:conf/mm/WangLWZZ14}
and geographical location information are being generated at
an unprecedented scale on the Web. For instance, in the GPS navigation system, a
POI (point of interest) is a geographically anchored pushpin
that someone may find useful or interesting, which is usually annotated with textual information (e.g., descriptions
and users' reviews). In social media (e.g., Flickr, Facebook,
FourSquare, and Twitter), a large number of posts and photos are usually associated with a geo-position as well as a short text. In the above applications, a large volume
of spatio-textual objects may continuously arrive with high
speed. For instance, it is reported
that there are about 30 million people sending geo-tagged data
out into the Twitterverse, and 2.2 percentage of the global
tweets (about 4.4 million tweets a day) provide location data
together with the text of their posts\footnote{http://www.futurity.org/tweets-give-info-location}.

In this paper, we aim to take advantage of the combination
of geo-tagged information within microblogs to support temporal spatial keyword
search queries on microblogs, where users are interested
in getting a set of recent microblogs each of which contains all keywords and closet to user's location.
Due to the large
numbers of microblogs that can satisfy the given constraints, we limit the query answer to k microblogs, deemed most relevant to the querying user based on a ranking function $f_{st}$
that combines the time recency and the spatial proximity of
each microblog to the querying user.
%Below are the motivating examples.
%
%\begin{figure}[thb]
%\newskip\subfigtoppskip \subfigtopskip = -0.1cm
%\centering
%\includegraphics[width=.60\linewidth]{TSK/fig/motivation_example.eps}
%\vspace{-1mm}
%\caption{\small Online Tweets Example }
%\label{fig:skt motivation_example}
%\end{figure}

\begin{example}
\label{exa:skt motivation example}
In Fig.~\ref{fig:skt motivation_example}, suppose there are a set of tweets, each of which is described by the send's interests, the sender's location and creation time of tweet. When a GPS-enabled smartphone user wants to find the most recent tweet who has the same interests as him and closes his location, he may send the local search server two keywords, \textit{swimming} and \textit{gym}. Based on the user's current location derived from the smartphone
, the two query keywords, and the creation time, tweet $o_1$ is returned by the server. Note that although tweet $o_3$ is closer to $Q$ than $o_1$,
it doesn't satisfy the keyword constraint. Tweet $o_4$ satisfies the keyword constraint and spatial closer to $Q$ than $o_1$,
but the creation time of $o_4$ is too early.
\end{example}

\vspace{1mm}
\noindent \textbf{Challenges.} There are three key challenges in efficiently processing
temporal spatial keyword queries over temporal spatial keyword microblogs
streams. Firstly, a massive number of microblogs, typically in the
order of millions, are posted in many applications, and
hence even a small increase in efficiency results in significant
savings. Secondly, the streaming temporal spatial keyword microblogs may continuously arrive in a rapid rate
which also calls for high throughput performance for better
user satisfaction. Thirdly, the above challenges call for relying on \textbf{only} in-memory data structures to index and query incoming microblogs, where memory is a scarce resource.

Based on the above challenges, we first discuss what kinds of techniques should be adopted from different angles. Then, we propose a novel index technique, namely the Segment Signature Grid trees (\ssg for short), to effectively organize continuous temporal spatial keyword microblogs.
In a nutshell, \ssg is essentially a set of Signature \gtrees, each node of which is enriched with the reference to a frequency signature file for the objects contained in the sub-tree rooted at the node. Then, an efficient temporal spatial keyword search algorithm is designed to facilitate the online top-$k$ temporal spatial keyword search. Extensive experiments show that our \ssg based \textbf{TSK} algorithm achieves very substantial improvements over the nature extensions of existing techniques due to strong filtering power.

%A cost model is devised to partition the incoming objects into different time intervals by the guidance of the false positive probability of superimpose coding signature file. To further improve the performance of the \ssg, we develop a new index, called \kcssg. More specifically, based on the \ssg, we develop a combination signature, which can effectively reduce the combination error caused by keyword combination, to enhance the textual filtering performance. Extensive experiments show that our \kcssg based \textbf{TSK} algorithm achieves very substantial improvements over the nature extensions of existing techniques due to strong filtering power. For instance, with 5 million online microblogs, \kcssg can process around 3400 queries per second, compared with about 430 queries by nature extensions. Meanwhile, the insertion and deletion time of \kcssg are only 10\% of nature extensions, and the memory cost is only half of nature extensions.

The rest of this paper is organized as follows.
Section~\ref{subsec:skt preliminaires} formally defines the problem of top $k$ temporal spatial keyword search. We introduce the techniques should be adopted in Section~\ref{subsec:skt motivation}. Section~\ref{sec:skt framwork} presents the framework of \ssg and algorithm. Extensive experiments are reported in Section~\ref{sec:skt experiment}.

\vspace{-2mm}
\section{Preliminaires}
\label{subsec:skt preliminaires}
In this section, we present problem definition and necessary preliminaries of top $k$ temporal spatial keyword Search. Table~\ref{tab:skt notation} below summarizes the mathematical notations used throughout this section.

\begin{table}%\label{tab:symbols}
	\centering
    \small
	\begin{tabular}{|p{0.27\columnwidth}| p{0.62\columnwidth} |}
		\hline
		\textbf{Notation} & \textbf{Definition} \\ \hline\hline
		~$o(q)$             & s geo-textual object (query)                                \\ \hline
        ~$o.\psi(q.\psi)$   & a set of keywords (terms) used to describe o (query q)                                \\ \hline
	  	~$o.loc(q.loc)$     & location of the object o (query q)                                 \\ \hline
        ~$o.t(q.t)$         & timestamp of the object o (query q)  \\ \hline
		~$\mathcal{V}$      & vocabulary                              \\ \hline
		~$w$                & a keyword (term) in $\mathcal{V}$                                \\ \hline	
    	~$l$  & the number of query keywords in $q.\psi$ \\\hline
        ~$w$   & the number of results should be returned  \\\hline
        ~$m$  & the number of independent uniformly random hash functions \\\hline
        ~$\alpha$  & the preference parameter to balance the spatial proximity and temporal recency \\\hline
        ~$b$                & size of a bit-block of GS                               \\ \hline	
        ~$B$                & the sparse vector size                              \\ \hline
        ~$c$                & the leaf node capacity of \ssg                              \\ \hline
        ~$f_s(o.loc,q.loc)$     & the spatial proximity between o.loc and q.loc                                \\\hline
        ~$f_t(o.t,q.t)$         & the temporal recency between o.t and q.t \\ \hline
        ~$f_{st}(o,q)$         & the spatial-temporal ranking score between o and q\\ \hline
	\end{tabular}
    \vspace{2mm}
    \caption{Notations} \label{tab:skt notation}	
    \vspace{-6mm}
\end{table}

In this section, $\mathcal{O}$ denotes a sequence of incoming stream geo-textual objects. A \textbf{geo-textual object} is a textual message with geo-location and timestamp, such as geo-tagged and check-in tweets. Formally, a geo-textual object o is modeled as $o$ = $<\psi,$ $loc,$ $t>$, where $o.\psi$ denotes a set of distinct keywords (terms) from a vocabulary set $\mathcal{V}$, $o.loc$ represents a geo-location with latitude and longitude, and $o.t$ represented the timestamp of object.

\begin{definition}[\textbf{Top-$k$ Temporal Spatial-Keyword ($TSK$) Query}] \label{def:skt sktq}
A top-$k$ temporal spatial keyword query $q$ is defined as q = $<\psi,$ $loc,$ $t,$ $k>$, where $q.\psi$ is a set of distinct user-specified keywords (terms), $q.loc$ is the query location, $q.t$ is the user submitted timestamp, $k$ is the number of the result user expected.
\end{definition}

\begin{definition} [\textbf{Spatial Proximity $f_s(o.loc,q.loc)$}]
Let $\delta_{max}$ denote the maximal distance in the space, the spatial relevance between
the object $o$ and the query $q$, denoted by $f_s(o.loc,q.loc)$, is defined as $\frac{\delta(q.loc, o.loc)}{\delta_{max}}$.
\end{definition}

\begin{definition} [\textbf{Temporal Recency $f_t(o.t,q.t)$}]
Let $\lambda_{max}$ denote the maximal time difference in the timestamp, the temporal recency between
the object $o$ and the query $q$, denoted by $f_t(o.t,q.t)$, is defined as $\frac{\lambda(o.t,q.t)}{\lambda_{max}}$.
\end{definition}

Based on the spatial proximity and temporal recency
between the query and the object, the \textbf{Spatial-temporal Ranking Score} of an object o regarding the query q can
be defined as follows.

\begin{definition} [\textbf{Spatial-temporal Ranking Score} $f_{st}(o,q)$]
Let $\alpha$ ($0\leq\alpha\leq1$)\footnote{$\alpha$=1 indicates that the user cares only about the spatial proximity of geo-textual objects, $\alpha$=0 gives the k most recent geo-textual objects in dataset} be the preference parameter specifying the trade-off between the spatial proximity and temporal recency, we have
\begin{equation}\label{eq:skt function}
f_{st}(o,q)=\alpha*f_{s}(o.loc,q.loc) + (1-\alpha)f_t(o.t,q.t).
%X_t\in Y =\{0, O(l),R(l),A(O,l),A(R,l):1\leq l \leq v\}.
\end{equation}
Note that the objects with the \textbf{small score values} are preferred (i,e., ranked higher).
\end{definition}

\begin{comment}
Hence, based on the given query $q$, the \textbf{spatial-keyword temporal score} is defined as follows.
\begin{equation}\label{eq:statespaceY}
f(o,q)=\alpha*f_{s}(o.loc,q.loc) + (1-\alpha)f_t(o.t,q.t).
%X_t\in Y =\{0, O(l),R(l),A(O,l),A(R,l):1\leq l \leq v\}.
\end{equation}
where $f_{s}(o.loc,q.loc)$ is the spatial proximity score of the distance between query $q$ and object $o$, $f_t(o.t,q.t)$ indicates the temporal recency score between $q$ and $o$.
\end{comment}

\begin{definition}[\textbf{Temporal Spatial Keyword Search}] \label{def:skt skts}
Given a set of geo-textual objects $\mathcal{O}$ and a temporal spatial keyword query $q$, we aim to find the top $k$ geo-textual objects with \textbf{smallest} spatial-temporal score, and each of which contains \textbf{all} of the query keywords.
\end{definition}

In the section hereafter, we abbreviate the geo-textual object and the geo-textual query as $object$ and $query$ respectively, if there is no ambiguity. We assume there is a total order for keywords in $\mathcal{V}$, and the keywords in each query and object are sorted accordingly. For presentation simplicity, we assume $w_i<w_j$ if $i<j$.

\vspace{-2mm}
\section{Motivation}\label{subsec:skt motivation}

Due to massive amount of objects and queries are being generated at an unprecedented scale, it is imperative to devise efficient indexing technique such that high arrival rates of incoming objects can be inserted immediately, expired objects can be deleted from its contents with the approximative rate as insertion, a large number of unpromising objects can be filtered at a cheap cost, and the memory cost should linear to the object size increase. We show that a good indexing mechanism over continuous geo-textual objects should satisfy following three criterion.

% ~\cite{DBLP:conf/mm/2015,DBLP:journals/pr/WuWLG18,DBLP:journals/corr/WuW17}
\subsection{Efficient Textual Retrieval}\label{sec:skt text}
Existing textual retrieval indexes , which can effectively combine with other spatial or temporal indexes, are mainly falling into one of two categories: inverted index ~\cite{DBLP:journals/tip/WangLWZ17,DBLP:journals/tip/WangLWZZH15,DBLP:journals/tnn/WangZWLZ17} and signature file ~\cite{DBLP:conf/sigir/WangLWZZ15,DBLP:journals/pr/WuWLG18,DBLP:journals/cviu/WuWGHL18,DBLP:conf/edbt/ZhangZZLCW14}. Owing to only the indexes of related keywords have been extracted in inverted index, inverted index excels in query processing efficiency while compared with signature. However, signature has faster insertion speed and utilizes significantly less storage overhead. According to~\cite{TSE1984}, inverted index requires much larger space overhead than signature file ($\approx$10 times), and demands expensive updates of the index when insert a new document, due to many terms of inverted index needs to store more than once and frequently undergo re-organization triggers under intensive information
insertion/updating procedures. Furthermore, the inverted index is also reported to perform poorly for multiple terms queries in ~\cite{DBLP:conf/edbt/FaloutsosJ92}. Obviously, taking the properties of fast update online system into consideration, signature file seems a better choice.

\subsection{Efficient Spatial Partition}\label{sec:skt spatial}
To support high arrival rates of incoming objects, space-partitioning index (e.g., \qdtree~\cite{DBLP:journals/cacm/Gargantini82,DBLP:conf/pakdd/WangLZW14,DBLP:journals/corr/abs-1708-02288,DBLP:conf/icde/ZhangZZL13,DBLP:journals/tkde/ZhangZZL16}, \pyr~\cite{DBLP:conf/pods/ArefS90,DBLP:conf/ijcai/WangZWLFP16,DBLP:journals/pr/WuWGL18}, and \gst~\cite{DBLP:conf/sigmod/PapadiasMH05,DBLP:journals/tip/WangLWZ17,DBLP:conf/cikm/WangLZ13}) is more famous than object-partitioning index (e.g., \rtree). As stressed in ~\cite{DBLP:conf/icde/MagdyMENH14}, space-partitioning index is more suitable to high update system because of its disjoint space decomposition policy, while the shape of object-partitioning index is highly affected by the rate and order of incoming data, which may trigger a large number of node splitting and merging. Motivated by this, we should adopt space-partitioning index as our proposed spatial partition index.
%It is apparent that, our proposed spatial partition index should fall into space-partitioning category.

\subsection{Efficient Temporal Partition}\label{sec:skt temporal}
Regarding the temporal partition techniques, which can combine with other textual or spatial index, are mainly divided into two categories: \logs ~\cite{DBLP:journals/acta/ONeilCGO96} and \window ~\cite{AmrSIGSPATIAL2014}. \logs partitions the data into a sequence of indexes with exponentially increasing size, while \window  partitions the data into a sequence of indexes with equal size or with equal time range. Obviously, the performance of \logs is better than \window if top-$k$ results can be find in most recent data. However, a sharp drop will be met if the top-$k$ results can be find in most recent data, due to its exponent increase size partition. Furthermore, the insertion cost of \logs will significant increase while combining with other index. Finally, the deletion cost of \logs is always higher than that of \window, due to the deletion operation can only occur at the oldest index. Motivated by the above reasons, our proposed temporal partition strategy should fall into \window.

\vspace{-2mm}
\section{SSG-TREE FRAMEWORK}
\label{sec:skt framwork}

Based on the above motivations, in this section, we present a segment signature grid trees (\ssg for short) that supports update at high arrive rate and provides the following required functions for geo-textual object search and ranking: \MyRoman{1})\textbf{textual filtering}: all the textually irrelevant trees, nodes and objects have to be discarded as early as possible to cut down the search cost; \MyRoman{2})\textbf{spatial filtering}: all the spatially irrelevant nodes have to be filtered out as early as possible to shrink the search space; \MyRoman{3})\textbf{temporal filtering}: all the spatially irrelevant trees, nodes and objects have to be accessed as late as possible to follow the chronological order; and  \MyRoman{4})\textbf{relevance computation and ranking}: since only the top-k objects are returned and $k$ is expected to be much smaller than the total number of match objects, it is desirable to have an incremental search process that integrates the computation of the joint relevance, and object ranking seamlessly so that the search process can stop as soon as the top-$k$ objects are identified.

%Below, we first introduce data structure of \ssg. Section ~\ref{sec:skt query} presents the search algorithm for \ssg.

Below, we first introduce frequency signature to support keyword filtering in section ~\ref{sec:skt f signature}. Section ~\ref{sec:skt grid tree} presents the gird tree for spatial partition. Detail data structure and search algorithm are depicted in section ~\ref{sec:skt structure} and  section ~\ref{sec:skt query} respectively.

\subsection{Frequency Signature}\label{sec:skt f signature}
The traditional superimposed coding signature is widely used in many off-line indexes such as \st~\cite{DBLP:dblp_conf/sigir/Deppisch86}, tow-level superimposed coding~\cite{DBLP:journals/tkde/LeeKP95,NNLS2018,TC2018}, \irtree~\cite{Ian08ICDE} etc. It is well known that the frequency of keyword occurrence in large texts follows Zipf's law. However, the traditional superimposed coding signature does not differentiate the frequencies of different keywords in the dataset. Hence, in this subsection, we study frequency superimposed coding signature based on the keywords's frequency. In many applications, keyword frequencies are estimated or collected with historical data. Statistics of such information are maintained, especially for the high frequency keywords. Such data is useful in optimizing the optimal configuration of superimposed coding signature. The performance improvement is remarkable even with rough estimations of keyword frequencies.

Same as the traditional superimposed coding, the frequency superimposed coding signature also uses $k$ independent uniformly random hash functions map an $n$-terms set $\mathcal{W}$=\{$w_1$, $\cdots$, $w_n$\} into a $B$-bit array. But the major difference is that instead of hashing all terms in range [1,$B$], we divide the $B$-bit array into a set of different size frequency blocks, and hash the terms into different blocks based on their term frequency. More specifically, based on the distribution of term frequency, we partition the terms into different frequency block by a series of frequency thresholds. Assume the aggregate frequency of the entire historical data is $\xi$, the bit array is divided into $u$ frequency blocks, and the aggregate frequency of the $i$-th frequency block is $\xi_i$. Hence, the size of the $i$-th block can be calculated by $\frac{\xi_i}{\xi}\times B$. Then, for each term in $n$-terms set, we can hash them into different frequency blocks. Due to the terms have similar frequency are partitioned into the same blocks and the blocks allocated to high frequency terms have been assign more bits and less terms, the frequency superimposed coding signature effectively avoid the interference from low frequency terms.
\begin{comment}
%In Figure ? we present an example to illustrate the detail of the frequency superimposed coding.
%add an example to explain

%\begin{figure}[thb]
%\newskip\subfigtoppskip \subfigtopskip = -0.1cm
%\centering
%\includegraphics[width=\linewidth]{TSK/fig/fsig_example.eps}		
%\vspace{-5mm}
%\caption{\small Frequency signature Example }
%\label{fig:skt fsig_example}
%\end{figure}

\begin{example}
\label{exa:frequency signature example}
As shown, given a term frequency table, and an object $o_1(w_0, w_9)$, we want to create frequency signature of $o_1$. Suppose $m$=1, $B$=4, and the hash function is $f(x)=id\%B$. Based on the distribution of term frequency, we partition these terms into three groups: $group1(w_0, w_1)$, $group2(w_2, \cdots, w_5)$ and $group3(w_6, \cdots, w_{13})$. Due to the total frequency is 32, and the aggregate frequency of group1, group2 and group3 are 16, 8, 8 respectively, the block size of these three groups are 2, 1, 1 respectively. Hence, the signature of $o_1$ is \textbf{1001}.
\end{example}

According to ~\cite{DBLP:conf/sigmod/Faloutsos85}, superimposed coding signature requires $m$ (typically 10) bits comparisons to accept or reject a signature in a single word query. Due to this typical value of $m$ can minimize the false positive probability, the number of I/O access minimize in an offline disk-based system. However, in an online system, the importance of false positive probability decreases while compared with that in offline disk-based system. Although the lower false positive probability leads to more node access, it cannot incur expensive I/O access. Meanwhile, a high $m$ may cause high bits comparisons during search processing. Thus, we need to make a trade-off between the false positive probability and bits comparisons.

\end{comment}
\subsection{Grid-Tree}\label{sec:skt grid tree}
The existing space-partitioning techniques are mainly falling into two categories: Grid structure and \qdtree. However, both of them have their own limitations while combined with signature file. The Grid structure is insensitive to system update, but it is hard to decide its granularity. High granularity will cause massive amount of node signature, which will lead to tremendous memory cost. Low granularity will result in a great deal of fat leaf nodes that contain several thousand of objects, which will significantly reduce query performance. Different as Grid structure, \qdtree can dynamically adjust granularity according to object distribution to achieve balance allocation after sacrificing partial update efficiency. Because each internal node of \qdtree has exactly four children, for the leaf node, it is easy to satisfy the leaf node capacity constraint, trigger node split, and complicated object redistribution. Thus, instead of splitting into four children nodes in \qdtree, we partition node into a set of grids. More specifically, the \gtree partitions the spatial node into $n^{2}$  equal non-overlapping girds, where $n\geq 2$, to delay the time of redistribution and avoid continual node split. Evidently, the spatial queries algorithms that can be applied on \qdtree can easily be applied on the \gtree\footnote{Obviously, if $n$ equals two, the \gtree degrades to \qdtree}.

In order to support efficient geo-textual object search, the \ssg clusters a set of geo-textual objects into a series of continual signature \gtree, which cluster the objects into disjointed subsets of nodes and abstracts them in various granularities. By doing so, it capacitates the pruning of those (textually, spatially or temporally) irrelevant subsets or trees. The efficiency of \ssg depends on its pruning power is highly related to the effectiveness of object clustering and the search algorithms. Our \ssg clusters spatially and temporally close objects together and carries textual information in its node signatures.

%These designs distinguish our \esg from \textbf{Taghreed} in ~\cite{AmrSIGSPATIAL2014}.

\ssg is essentially a set of Signature \gtrees, each node of which is enriched with the reference to a signature file for the objects contained in the sub-tree rooted at the node. In particular, each node of an \ssg contains all spatial, temporal, and keyword information; the first is in the form of a rectangle, the second is in the form of timestamp, and the last is in the form of a signature.

More formally, the leaf node of \esg has the form ($nSig$, $r$, $t$, $oSig$). $oSig$ refers to a set of signatures created by the objects of current node, $nSig$ is the \textbf{OR-ing} of all signature in $gSig$, $r$ is the area covered by current node, and $t$ is the latest timestamp aggregated from the objects. An inner node has the form ($nSig$, $r$, $t$, $cp$). $cp$ are the address of the children nodes, $nSig$ is the \textbf{OR-ing} of all the signatures of its children, $r$ is the area covered by current node, and $t$ is the latest timestamp aggregated from its children nodes. To simplify the following presentation we degrade the \gtree to its special case \qdtree in the example.

\subsection{Processing of TSK queries}\label{sec:skt query}

\begin{comment}
\begin{figure}[thb]
\newskip\subfigtoppskip \subfigtopskip = -0.1cm
\centering
\includegraphics[width=.99\linewidth]{run_example.eps}
\vspace{-2mm}
\caption{\small Objects Locations and Objects Descriptions}
\label{fig:skt run_example}
\end{figure}

\begin{figure}[thb]
\newskip\subfigtoppskip \subfigtopskip = -0.1cm
\centering
\includegraphics[width=.99\linewidth]{run_table.eps}
\vspace{-2mm}
\caption{\small Distances Between Query and Objects and Nodes in Fig.\ref{fig:skt run_example}}
\label{fig:skt run table}
\end{figure}
\end{comment}

\begin{algorithm}
\begin{algorithmic}[1]
%\small
%\scriptsize
\footnotesize
\caption{\bf TSK Search($q$, $k$, $\mathcal{I}$) }
\label{alg:skt search}
\INPUT $q~:$ the spatial-keyword temporal query, $k~:$ the number of object return,
   $\mathcal{I}~:$ current \ssg index
\OUTPUT $\mathcal{R}:$ top-$k$ query result results

\STATE $\mathcal{R}:= \emptyset$; $\mathcal{H} = \emptyset$, $\lambda_{max} = \infty$

\STATE $\mathcal{H}\leftarrow$ new a min first heap

\STATE build frequency signature for query
\label{alg:skt_build fsig}
%\State{$q.nsig\leftarrow$ calculate query's node signature}
%\label{alg:node sig}
%\State{$q.esig\leftarrow$ calculate query's entry signature}
%\label{alg:ags sig}
%\State{$gain\leftarrow MIND_{st}(q,index[inv].root)$}
\STATE $\mathcal{H}$.Enqueue($\mathcal{I}.root, MIND_{st}(q,\mathcal{I}.root)$)
\label{alg:skt push root}

\WHILE{ $\mathcal{H} \not = \emptyset$ }
     \label{alg:skt_loop_s}
     \STATE $e \leftarrow$ the node popped from $\mathcal{H}$
     \IF{$e$ is a leaf node}
        \label{alg:skt_leaf s}

            \FOR{ each object $o$ in node $e$}
                \IF{$o$ passed the signature test \textbf{AND} $f_{st}(q, o) \leq \lambda_{max}$}
                            \STATE $\lambda_{max} \leftarrow f_{st}(q, o)$
                            \STATE update $\mathcal{R}$ by $(o, f_{st}(q, o))$
                             \label{alg:skt_leaf e}
                \ENDIF
            \ENDFOR

     \ELSE
        \FOR{ each child $e'$ in node $e$}
            \label{alg:skt_non-leaf s}
            \IF{$e'$  passed the signature test \textbf{AND} $MIND_{st}(q, e') \leq \lambda_{max}$}
               \STATE $\mathcal{H}$.Enqueue($e',  MIND_{st}(q, e')$)
               \label{alg:skt_non-leaf e}
            \ENDIF
        \ENDFOR
     \ENDIF

    \STATE process the root node of next \ssg
    \label{alg:skt_process next}
\ENDWHILE
\RETURN{$\mathcal{R}$}
\label{alg:skt_return}
\end{algorithmic}
\end{algorithm}

We proceed to present an important metric, the minimum spatial-temporal distance $MIND_{st}$, which will be used in the query processing. Given a query $q$ and a node $N$ in the \esg, the metric $MIND_{st}$ offers a lower bound on the actual spatial-temporal distance between query $q$ and the objects enclosed in the rectangle of node $N$. This bound can be used to order and efficiently prune the paths of the search space in the \esg.

\begin{definition} [$MIND_{st}(q,N)$]
The distance of a query point $q$ from a node $N$ in
the \esg, denoted as $MIND_{st}(q,N)$, is defined as follows:
\begin{equation}\label{eq:skt mindst}
\begin{aligned}
MIND_{st}(q,N)=\alpha*\frac{MIND_{s}(q.loc, N.r)}{\delta_{max}}+ \\
(1-\alpha)*\frac{MIND_{t}(q.t, N.t)}{\lambda_{max}}
\end{aligned}
\end{equation}
where $\alpha$, $\delta_{max}$, and $\lambda_{max}$ are the same as in Equation \ref{eq:skt function};
$MIND_{s}(q.loc, N.r)$ is the minimum Euclidian distance between $q.loc$ and $N.r$,
$MIND_{t}(q.t, N.t)$ is the minimum time difference between $q.t$ and $N.t$.
\end{definition}

A salient feature of the proposed \esg structure is that
it inherits the nice properties of the \qdtree for query processing.

\begin{theorem}\label{lemma:skt point_prune}
Given a query point $q$, a node $N$, and a set of objects $\mathcal{O}$ in node $N$, for any $o\in\mathcal{O}$, we have $f_{st}(q,N)\leq DIST_{st}(q,o)$.
\end{theorem}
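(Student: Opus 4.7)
The statement should be read as asserting that $MIND_{st}(q,N)$ is a valid lower bound on $f_{st}(q,o)$ for every object $o$ stored in the subtree rooted at $N$ (the symbol $DIST_{st}$ on the right-hand side being the actual spatio-temporal score $f_{st}$, and the $f_{st}(q,N)$ on the left-hand side being the lower bound $MIND_{st}(q,N)$). The plan is to prove this bound componentwise, then combine the two components using the convexity of the $\alpha$-weighted sum used in the ranking function.

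First I would expand both sides according to their definitions:
\begin{equation*}
MIND_{st}(q,N)=\alpha\frac{MIND_{s}(q.loc,N.r)}{\delta_{max}}+(1-\alpha)\frac{MIND_{t}(q.t,N.t)}{\lambda_{max}},
\end{equation*}
\begin{equation*}
f_{st}(q,o)=\alpha\frac{\delta(q.loc,o.loc)}{\delta_{max}}+(1-\alpha)\frac{\lambda(o.t,q.t)}{\lambda_{max}}.
\end{equation*}
Since $\delta_{max}$, $\lambda_{max}>0$ and $\alpha,1-\alpha\in[0,1]$, it suffices to prove the two inequalities $MIND_{s}(q.loc,N.r)\leq\delta(q.loc,o.loc)$ and $MIND_{t}(q.t,N.t)\leq\lambda(o.t,q.t)$ separately, and then take the $\alpha$-weighted sum.

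For the spatial part, I would invoke the covering property of the \ssg node: by construction (both at the leaf level, where $r$ bounds the stored objects, and at the internal level, where $r$ is the union of child rectangles), every object $o\in\mathcal{O}$ in the subtree of $N$ satisfies $o.loc\in N.r$. Hence the Euclidean distance from $q.loc$ to $o.loc$ is at least the minimum Euclidean distance from $q.loc$ to the rectangle $N.r$, which is exactly $MIND_{s}(q.loc,N.r)$. This is the standard MINDIST property inherited from the \qdtree/\gtree partitioning, mentioned right before the theorem as one of the "nice properties" that \ssg preserves.

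For the temporal part, I would use the definition that $N.t$ is the latest timestamp aggregated from the objects contained in (the subtree of) $N$, together with the fact that \window partitioning enforces $o.t\leq N.t\leq q.t$ for any object currently indexed (the query time is the most recent time). Thus $\lambda(o.t,q.t)=q.t-o.t\geq q.t-N.t=MIND_{t}(q.t,N.t)$. Combining this with the spatial inequality above by non-negative weights $\alpha$ and $1-\alpha$ yields $MIND_{st}(q,N)\leq f_{st}(q,o)$, completing the proof. The main obstacle here is not algebraic but conceptual: one must pin down precisely what $N.t$ represents for internal nodes (the maximum of child timestamps, propagated upward) and justify that $q.t$ is never earlier than $N.t$ for any live node — if a node's latest timestamp exceeded $q.t$ one would need to argue such a node cannot appear in the index at query time under the sliding-window invariant, otherwise the second inequality must be stated as $|q.t-N.t|$ rather than $q.t-N.t$.
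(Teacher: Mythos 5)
Your proof is correct and follows essentially the same route as the paper's: a componentwise argument (spatial containment of $o.loc$ in $N.r$ gives $MIND_{s}(q.loc,N.r)\leq f_{s}(q.loc,o.loc)$, and the aggregation of $N.t$ as the latest object timestamp gives the temporal bound), then an $\alpha$-weighted combination via Equations \ref{eq:skt function} and \ref{eq:skt mindst}. You are in fact slightly more careful than the paper, which asserts the temporal inequality without noting the sliding-window invariant $o.t\leq N.t\leq q.t$ that you correctly identify as necessary for it to hold.
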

\begin{proof}
Since object $o$ is enclosed in the rectangle of node $N$, the
minimum Euclidian distance between $q.loc$ and $N.r$ is
no larger than the Euclidian distance between $q.loc$ and $o.loc$:

\begin{equation*}
MISD_{S}(q.loc, N.r)\leq f_{s}(q.loc, o.loc)
\end{equation*}
For each timestamp $t$, $N.t$ is the maximum value $\mathcal{O}.t$ of all the object in node $N$. Hence:
\begin{equation*}
MISD_{t}(q.loc, N.r)\leq f_{t}(q.loc, o.loc)
\end{equation*}
According to Equation \ref{eq:skt function} and Equation \ref{eq:skt mindst}, we obtain:
\begin{equation*}
MIND_{st}(q,N)\leq f_{st}(q,o)
\end{equation*}
thus completing the proof.
\end{proof}

When searching the \esg for the $k$ objects nearest to a
query $q$, one must decide at each visited node of the \esg
which entry to search first. Metric $MIND_{ST}$ offers an approximation
of the spatial-temporal ranking score to every entry in the node and, therefore, can
be used to direct the search. Note that only node satisfied the constraint of query keywords need to be loaded into memory
and compute $MIND_{ST}$.

To process \textbf{TSK} queries with \esg framework, we
exploit the best-first traversal algorithm for retrieving
the top-k objects. With the best-first traversal algorithm, a priority
queue is used to keep track of the nodes and objects that have yet to
be visited. The values of $f_{st}$ and $MIND_{st}$ are used as the keys
of objects and nodes, respectively.

When deciding which node to visit next, the algorithm picks the
node $N$ with the smallest $MIND_{st}(q, N)$ value in the set of all
nodes that have yet to be visited. The algorithm terminates when $k$
nearest objects (ranked according to Equation \ref{eq:skt function}) have been found.

Algorithm~\ref{alg:skt search} illustrates the details of the \ssg based \textbf{TSK} query. A minimum heap $\mathcal{H}$ is employed to keep the \gtree's nodes where the key of a node is its minimal spatial-temporal ranking score. For the input query, we calculate its frequency signature in Line~\ref{alg:skt_build fsig}. In Line ~\ref{alg:skt push root}, we find out the root node of current time interval, calculate the minimal spatial-temporal ranking score for the root node, and then pushed the root node into the $\mathcal{H}$. The the algorithm executes the while loop (Line~\ref{alg:skt_loop_s}-\ref{alg:skt_process next})until the top-$k$ results are ultimately reported in Line~\ref{alg:skt_return}.

%Based on the given query $q$, in Line~\ref{alg:node sig}-Line~\ref{alg:ags sig} the node signature and $AGS$ signature calculated respectively.

In each iteration, the top entry $e$ with minimum spatial-temporal ranking score is popped from $\mathcal{H}$. When the popped node $e$ is a leaf node(Line~\ref{alg:skt_leaf s}), for each signature in node $e$, we will iterator  extract the objects that satisfy query constraint and check whether its spatial-temporal ranking score is less than $\lambda_{max}$. If its score is not larger than $\lambda_{max}$, we push $o$ into result set and add update $\lambda_{max}$. When the popped node $e$ is a non-leaf node(Line~\ref{alg:skt_non-leaf s}), a child node $e^{'}$ of $e$ will be pushed to $\mathcal{H}$ if it can pass the query signature test and the minimal spatial-temporal ranking score between $e^{'}$ and $q$, denoted by $MIND_{st}(q, e^{'}$, is not larger than $\lambda_{max}$ (Line~\ref{alg:skt_non-leaf s}-~\ref{alg:skt_non-leaf e}). We process the root node of next interval in Line~\ref{alg:skt_process next}. The algorithm terminates when $\mathcal{H}$ is empty and the results are kept in $\mathcal{R}$.

\newcommand{\ifq}{IFQ\xspace}
\newcommand{\sifq}{SIFQ\xspace}
\newcommand{\essg}{SSG\xspace}
\newcommand{\efssg}{FSSG\xspace}
\newcommand{\ekcssg}{KCSSG\xspace}
\vspace{-2mm}
\section{Experiments}\label{sec:skt experiment}
In this section, we present the results of a comprehensive
performance study to evaluate the effectiveness and efficiency
of our techniques proposed in this section.

\subsection{Baseline Algorithms}\label{sec:baseline}

To the best of our knowledge, no existing work investigating top-$k$ queries on spatial-keyword temporal data.
Hence, for comprehensive performance evaluation, we discuss how to exploit existing techniques
in ~\cite{DBLP:conf/icde/MagdyMENH14} for processing \textbf{TSK} queries. And we develop two baselines by utilizing
 existing index structures, namely IFQ and SIFQ.
% Hence, for comprehensive performance evaluation, We develop two baselines by utilizing existing index structures, namely IFQ and SIFQ. In this section, we implement and evaluate following algorithms for SKT.

\MyRoman{1}) Inverted File plus \qdtree (\textbf{\ifq}). \ifq first employs \qdtree to partition objects
into leaf cells according to their location information. Then, the objects inside each cell are stored in
a reversed chronological order. Finally, for the objects in each leaf cell, we build inverted file for
keyword filtering proposed and recursively construct the inverted file for its ancestral cells.
%In the query process, the algorithm first employs a heap to keep track of the minimum spatial-temporal ranking score of the nodes and objects that have be seen so far and decide which node to visit next. When a node pops from the heap, the algorithm extracts the children nodes or objects contained all keywords with the assistance of inverted file and add them into heap for further processing. The algorithm terminates when $k$ nearest objects have been found.

%Then, for a newly seen node, the algorithm extracts the children nodes or objects contained all keywords with the assistance of inverted file and add them into heap for further processing. When deciding which node to visit next, the algorithm picks the node $N$ with the smallest $MIND_{st}(q,N)$ value in the set of all nodes that have yet to be visited. The algorithm terminates when $k$ nearest objects have been found.

\MyRoman{2}) Segment Based Inverted File plus Quadtree (\textbf{\sifq}). \sifq is an enhanced version of \ifq. Similarly, it employs \qdtree to partition the objects into different cell, uses reversed chronological order to organize the object list in leaf cell, and build the inverted file for all the cells which contain objects. The major difference between them is that \ifq organizes all the objects in an single quadtrees, but \sifq partitions all the incoming objects into a set of quadtrees or segments by time unit.
%The algorithm of \sifq similar to Algorithm ~\ref{alg:skt search}, except the signature test procedure. Instead of signature test, we uses the inverted file to obtain the objects or nodes that satisfied all the keyword constraint.

\subsection{Experiment Setup}\label{sec:exp setup}

In this section, we implement and evaluate following algorithms.
\begin{itemize}
\item{\textbf{\ifq.}} \ifq based \textbf{TSK} algorithm proposed in baseline algorithm.
\item{\textbf{\sifq.}} Enhanced \ifq by partitioning the objects into a set of time unit in baseline algorithm.
\item{\textbf{\essg.}} \ssg based \textbf{TSK} algorithm proposed in Section ~\ref{sec:skt framwork}.
%\item{\textbf{\ekcssg.}} Enhanced \ssg by keyword combination in Section ~\ref{subsec:skt combination}.
\end{itemize}

\vspace{1mm}
\noindent \textbf{Dataset.}
All experiments are based on a real-life dataset \textbf{TWEETS}. TWEETS is a real-life dataset collected from Twitter [5], containing 13 million geo-textual tweets from May 2012 to August 2012. The statistics of \textbf{TWEETS} are summarized in Table ~\ref{tab:skt dataset}.
%	\textbf{Property} &  \textbf{# of objects} & \textbf{vocabulary size} & \textbf{avg. \# of keywords per object}
 \begin{table}
	
    \centering
    \begin{tabular}{|c|l|l|l|l|}
	\hline
	\textbf{Property} &  \# of objects & vocabulary & avg. \# of term per obj.  \\
	\hline
	\hline
    \textbf{TWEETS}  & $13.3M$ &  $6.89M$ & $10.05$ \\
    \hline
    \end{tabular}
    \vspace{2mm}
\caption{Dataset Details}
    \vspace{-6mm}
    \label{tab:skt dataset}
\end{table}

\vspace{1mm}
\noindent \textbf{Workload.}
The workload for the \textbf{TSK} query consists of 1000 queries, and the average query response time are
employed to evaluate the performance of the algorithms. The query locations are randomly selected from
the underlying dataset. On the other hand, the likelihood of a keyword $t$ being chosen as query keyword
 is $\frac{freq(t)}{ \sum_{t_i \in \mathcal{V}} freq(t_i)}$ where $freq(t)$ is the term frequency of $t$ in the dataset. The number of query keywords ($l$) varies from $1$ to $5$, the number of results ($k$) grows from $10$ to $50$, and the preference parameter $\alpha$ changes from $0.1$ to $0.9$. By default, $l$, $k$, and $\alpha$  are set to \textbf{3}, \textbf{10}, and \textbf{0.5} respectively. In addition, unless mentioned otherwise, the default value of cell capacity $c$ is $100$, and geo-textual object arrival rate $\tau$ is $4000$. We random select $10\%$ of the geo-textual objects from \textbf{TWEETS} as the historical object workload when
\essg are constructed. We always keep the most recent 5M objects in.

%\vspace{1mm}
%\noindent \textbf{Object Workload.}
%We use first $5\%$ of the geo-textual objects as the historical object workload when \efssg and \efcpssg are constructed. The remaining objects are fed to the continuous objects as streaming geo-textual data.

All experiments are implemented in C++. The experiments are conducted on a PC with 2.9GHz Intel Xeon 2 cores CPU and 32GB memory running Red Hat Enterprise Linux. For a fair comparison, we tune the important parameters of the competitor algorithms for their best performance. Particularly, the leaf capacity of all algorithms is set to 100. The partition threshold of \sifq is set to 400000.  Our measures of performance include insertion time, deletion time, storage overhead, and response time. The rest of this section evaluates index maintenance (Section ~\ref{subsec:skt maintaint}), and query processing (Section ~\ref{subsec:skt evaluation}).

\subsection{Index Maintenance}\label{subsec:skt maintaint}

In this subsection, we evaluate the insertion time, deletion time, storage overhead of all the algorithms\footnote{We ignore \ifq's insertion time and deletion time in comparison, due to it cannot meet the current arrive rate.}. Since all the objects are indexed in one single quadtree in \ifq, the insertion and deletion time are longer than the other algorithms. What's worse, the time to process 2000 objects for \ifq is large than one second. Thus, we ignore \ifq's insertion time and deletion time in comparison.

\vspace{0.5mm}
\noindent \textbf{Evaluation on storage overhead.} gives the
performance when varying the number of objects from 5M to 13M. As the number of objects increase, the storage overhead is stable for all algorithms, owing to we only keep the most
recent 5M objects in memory. The performance of signature based algorithms is always better than that of inverted index based algorithms. The storage overhead of signature based algorithms is at most one half of that of inverted index based algorithms. gives the same experiment with varying node capacity from 100 to 500. The storage overhead of algorithms meets a slight decrease, due to more information has been shared by the textual index.

\vspace{-2mm}
\section{Conclusions}\label{concl: con}

To the best of our knowledge, this is the first work to study the problem of top-$k$ continuous temporal spatial keyword queries over streaming temporal spatial keyword microblogs, which has a wide spectrum of application. To tackle with this problem, we propose a novel temporal spatial keyword  partition indexing structure, namely \ssg, efficiently organize a massive number of streaming temporal spatial keyword microblogs such that each incoming query submitted by users can rapidly find out the top-$k$ results. Extensive experiments
demonstrate that our technique achieves a high throughput performance over streaming temporal spatial keyword data.

\textbf{Acknowledgments:} This work was supported in part by the National Natural Science Foundation of China
(61272150, 61379110, 61472450, 61402165, 61702560, S1651002, M1450004), the Key Research Program of Hunan Province(2016JC2018), and project (2016JC2011, 2018JJ3691) of Science and Technology Plan of Hunan Province.

\bibliographystyle{spmpsci}      % mathematics and physical sciences
\bibliography{ref}
\end{document}